\crefname{question}{Question}{Questions}
\DeclareMathOperator{\poly}{poly}
\newcommand{\pathcover}{\mathcal{P}}
\newcommand{\bestparameterizedsolution}{O(k|E|\log{|V|})}
\newcommand{\reachabilitystructure}{\mathcal{S}}
\newcommand{\fptcomplexity}{O(k^24^k|V| + k2^k|E|)}
\newcommand{\reachabilitycomplexity}{O(k2^k)}
\begin{document}
\title{A linear-time parameterized algorithm for computing the width of a DAG\thanks{This work was partially funded by the European Research Council (ERC) under the European Union's Horizon 2020 research and innovation programme (grant agreement No.~851093, SAFEBIO) and by the Academy of Finland (grants No.~322595, 328877).}}
%
%
\author{Manuel C\'aceres\inst{1} \and
Massimo Cairo\inst{1} \and
Brendan Mumey\inst{2} \and
Romeo Rizzi\inst{3} \and
Alexandru I. Tomescu\inst{1}}
\authorrunning{M. C\'aceres et al.}
%
\institute{Department of Computer Science, University of Helsinki, Finland \email{\{manuel.caceresreyes,alexandru.tomescu\}@helsinki.fi, cairomassimo@gmail.com} \and
School of Computer Science, Montana State University, USA \email{brendan.mumey@montana.edu} \and
Department of Computer Science, University of Verona, Italy \email{romeo.rizzi@univr.it}}
\maketitle              
\begin{abstract}
The width $k$ of a directed acyclic graph (DAG) $G = (V, E)$ equals the largest number of pairwise non-reachable vertices. Computing the width dates back to Dilworth's and Fulkerson's results in the 1950s, and is doable in quadratic time in the worst case. Since $k$ can be small in practical applications, research has also studied algorithms whose complexity is parameterized on $k$. Despite these efforts, it is still open whether there exists a \emph{linear-time} $O(f(k)(|V| + |E|))$ parameterized algorithm computing the width. We answer this question affirmatively by presenting an $\fptcomplexity$ time algorithm, based on a new notion of \emph{frontier antichains}. As we process the vertices in a topological order, all frontier antichains can be maintained with the help of several combinatorial properties, paying only $f(k)$ along the way. The fact that the width can be computed by a single $f(k)$-sweep of the DAG is a new surprising insight into this classical problem. Our algorithm also allows deciding whether the DAG has width at most $w$ in time $O(f(\min(w,k))(|V|+|E|))$.

\keywords{Directed acyclic graph \and Maximum antichain \and DAG width \and Posets \and Parameterized algorithms \and Reachability queries}
\end{abstract}
%
%
%
%

\section{Introduction}



An \emph{antichain} in a directed acyclic graph (DAG) $G = (V, E)$ is a set of vertices that are pairwise non-reachable. The size $k$ of a maximum-size antichain is also called the \emph{width} of $G$. By Dilworth's theorem~\cite{dilworth2009decomposition}, the width of $G$ also equals the minimum number of paths needed to cover all the vertices of $G$. As such, it can be computed with minimum path cover algorithms e.g., in time $O(\sqrt{|V|} |E^*|)$ by a reduction to maximum matching~\cite{fulkerson1956note,hopcroft1973n} (where $E^*$ is the set of edges in the transitive closure of $G$, and we assume that it is already computed), or in time $O(|V||E|)$ by another reduction to minimum flows~\cite{BJG00,orlin2013max}.

Computing the width of a given DAG has applications in various fields. For example, in distributed computing, it is important to analyze if a distributed program can run so that no more than $w$ processes have mutual access to some resource; this relies on testing whether a particular DAG inferred from of the program trace has width $k \le w$~\cite{ikiz2006efficient,tomlinson1997monitoring}; in bioinformatics, the problems of Perfect Phylogeny Haplotype~\cite{bonizzoni2007linear}, and of Perfect Path Phylogeny Haplotyping~\cite{gramm2007haplotyping} are solved by recognizing special DAGs of width at most two; in evolutionary computation, the so-called dimension of a game between co-evolving agents~\cite{jaskowski2011formal} equals the width of a DAG defined from a minimum coordinate system of the game. For several practical applications, the width of the DAG may be small, for example, in~\cite{makinen2019sparse} the DAG comes from a so-called \emph{pan-genome} encoding genetic variation in a population: this has hundreds of millions of vertices, but yet it has a small width. Furthermore, there exist \emph{fixed-parameter tractable (FPT)} algorithms for several problems on DAGs, which are parameterized by the width of the DAG (see examples in scheduling~\cite{van2016precedence,colbourn1985minimizing} and computational logic~\cite{bova2015model,gajarsky2015fo}), therefore, efficiently recognizing graphs of small width becomes vital for their application. It is thus natural to ask whether there exists a faster algorithm computing the width $k$ of a DAG, when $k$ is small. This question is also related to the line of research ``FPT inside P''~\cite{giannopoulou2017polynomial} of finding natural parameterizations for problems already in P (see also e.g.,~\cite{DBLP:conf/soda/FominLPSW17,koana2021data,abboud2016approximation}).

Along this line, Felsner et~al.~\cite{felsner2003recognition} present the first algorithm parameterized on~$k$, working for the special case of \emph{transitive DAGs}, and running in time $O(k|V|^2)$. They also show how to recognize transitive DAGs of width $2$ and $3$ in time $O(|V|)$, and of width $4$ in time $O(|V|\log{|V|})$. The next parameterized algorithms for general DAGs are due to Chen and Chen: the first runs in time $O(|V|^2 + k\sqrt{k}|V|)$~\cite{chen2008efficient}, and the second one in time $O(\sqrt{|V|}|E| + k\sqrt{k}|V|)$~\cite{chen2014graph}. Recently, M\"akinen et al.~\cite{makinen2019sparse} obtained a faster one for sparse graphs, running in time $\bestparameterizedsolution$. 

Despite these efforts, the time complexity of computing the width of a DAG parameterized on $k$ is not fully settled, since all existing algorithms have either a superlinear dependence on $|E|$, or a quadratic dependence on $|V|$, in the worst case. We present here the first algorithm running in time $O(f(k)(|V| + |E|))$, where $f(k)$ is a function depending only on~$k$. Thus, for constant $k$, this is the first algorithm to run in linear time. Moreover, if an integer $w$ is also given in input, we can decide whether $k \le w$ in time $O(f(\min(w,k))(|V| + |E|))$. Specifically, our main result is the following theorem:

\begin{restatable}{theorem}{fptalgorithm}
\label{thm:main-fpt}
Given a DAG $G = (V,E)$ of width $k$, we can compute a maximum antichain of it in time $\fptcomplexity$.
\end{restatable}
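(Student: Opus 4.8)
The plan is to compute a maximum antichain by a single sweep over $V$ in topological order, incrementally maintaining a small family of \emph{frontier antichains} of each processed prefix together with their sizes, and returning the largest antichain ever witnessed. Order the vertices as $v_1, \dots, v_{|V|}$ topologically and write $V_i = \{v_1,\dots,v_i\}$. Let $M_i$ be the set of maximal elements (current sinks) of the sub-DAG induced by $V_i$. Since the elements of $M_i$ are pairwise non-reachable, $M_i$ is itself an antichain, so $|M_i| \le k$. I would define the frontier antichains of $V_i$ to be antichains of the prefix that are \emph{anchored} on $M_i$: for each subset $S \subseteq M_i$, keep one witness realizing the maximum-size antichain of the prefix whose ``topmost'' elements are exactly those of $S$. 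Because $|M_i| \le k$, there are at most $2^k$ such subsets, which bounds the maintained family by $O(2^k)$; establishing this bound, and that anchoring on the $\le k$ maximal elements discards no maximum antichain, is the first combinatorial property to prove.

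The second ingredient is the invariant that the largest currently stored frontier antichain equals a maximum antichain of $G[V_i]$. Granting this, the running maximum over all $i$ yields a maximum antichain of $G$, since any fixed antichain of $G$ becomes an antichain of $G[V_i]$ once $i$ reaches the largest index it contains. The correctness of the whole algorithm then reduces entirely to correctly transferring witnesses across steps.

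The update, when $v = v_{i+1}$ arrives, proceeds as follows. The vertex $v$ becomes a new maximal element, while every element of $M_i$ that reaches $v$ ceases to be maximal, so $M_{i+1}$ is obtained from $M_i$ by removing the in-reachable elements and inserting $v$. To transfer the stored witnesses I must decide, for each relevant combination, whether $v$ may extend it: because $v$ is last in the order, $A \cup \{v\}$ is an antichain \emph{iff} no element of $A$ reaches $v$. This is a reachability query against the current frontier, answered by the structure $\reachabilitystructure$, which I would maintain so that each query and each edge relaxation costs $\reachabilitycomplexity$. Recombining the $O(2^k)$ anchored witnesses under the new frontier and testing $v$ against each pair of relevant subsets costs $O(k^2 4^k)$ per vertex, while propagating reachability along in-edges of $v$ costs $\reachabilitycomplexity$ per edge, matching the target $\fptcomplexity$.

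The main obstacle is proving the two combinatorial properties that make the family simultaneously small and updatable. First, I must show that the transition rule correctly maps the $M_i$-indexed witnesses to $M_{i+1}$-indexed witnesses without recomputation, even though the anchor set $M_i$ itself changes at every step; the delicate case is a vertex that drops out of the frontier, whose incident witnesses must be reattached to surviving maximal elements that dominate it. Second, I must show that all reachability information needed for the update can be compressed into a structure of size $\reachabilitycomplexity$ per vertex and maintained incrementally at $\reachabilitycomplexity$ per edge, despite the frontier being a moving target. Verifying that these two properties jointly preserve the invariant of the previous paragraph is the technical heart of the argument.
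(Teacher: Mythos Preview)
Your plan diverges from the paper in the choice of what ``frontier'' means, and this is where it breaks. You anchor on $M_i$, the set of current sinks, and keep \emph{one} maximum-size witness per subset $S\subseteq M_i$. But $|M_i|$ can be much smaller than the width of $G[V_i]$, and the intersection $A\cap M_i$ does not determine whether $A$ reaches the next vertex $v_{i+1}$; two maximum antichains with the same $S$ can differ on this. Concretely, take $a,c,b,d,m,v$ in topological order with edges $c\to b$, $a\to d$, $a,b,c,d\to m$, and $a\to v$. At step $m$ the only sink is $m$, so you store just two witnesses; for $S=\emptyset$ the maximum antichains are $\{a,b\},\{a,c\},\{c,d\},\{b,d\}$, all of size~$2$, and you keep one. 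If you keep $\{a,c\}$ (or $\{a,b\}$), it reaches $v$ through $a$ and cannot be extended, so after processing $v$ your largest stored antichain has size~$2$; yet $\{b,d,v\}$ is an antichain of size~$3$. No tiebreak rule internal to your scheme distinguishes $\{a,c\}$ from $\{b,d\}$ at step~$m$, because the distinction depends on a vertex not yet seen. The ``reattachment'' you flag as delicate is exactly where information has already been lost.

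The paper avoids this by defining frontier antichains through a \emph{domination} partial order on antichains of equal size ($B$ dominates $A$ if every $b\in B$ is reached from $A$) and keeping \emph{all} maximal elements of this order. The $2^k$ bound then comes not from $|M_i|$ but from a minimum path cover: two frontier antichains cannot occupy the same subset of cover paths. Crucially, the update is characterised exactly: a $G_i$-frontier antichain either contains $v_i$ and arises as $A\cup\{v_i\}$ for a $G_{i-1}$-frontier $A$ not reaching $v_i$ (Type~1), or is a $G_{i-1}$-frontier not dominated by any Type-1 antichain (Type~2). In the example above, $\{b,d\}$ is the unique size-$2$ frontier at step~$m$ (it dominates the other three), so it is the one retained and correctly extended. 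Your sweep architecture and cost accounting match the paper's, but the combinatorial object you maintain must be the domination-maximal antichains, not one witness per sink-subset.
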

Note that $k$ corresponds to a property of the input graph that is unknown for the algorithm.
\paragraph{Approach.} 

The main idea behind \Cref{thm:main-fpt} is to traverse the graph in a topological order and have an antichain structure sweeping the vertices of the graph, while performing only $f(k)$ work per step. As such, it can also be viewed as an online algorithm receiving in every step a sink vertex and its incoming edges\footnote{Note that this notion of online algorithm is different from the ``on-line chain partition'' problem~\cite{bosek2012line}, where irrevocable decisions opt to be competitive against an optimal solution.}.

As a first attempt to obtain such a ``sweeping'' algorithm, one can think of maintaining only the (unique) \emph{right-most} maximum antichain (recall that all maximum antichains form a lattice~\cite{dilworth1990some})\footnote{Formally, we call right-most maximum antichain to the top element in the lattice of maximum antichains. If the graph is drawn with edges from left to right this element visually corresponds to the right-most maximum antichain.}. However, it is difficult to update this antichain in time $f(k)$ since inherently we need to perform graph traversals. As a second attempt, one could maintain more structure at every step (in addition to the right-most maximum antichain), while still staying within the $f(k)$ budget. Along this line, for transitive DAGs Felsner et~al.~\cite{felsner2003recognition} propose to maintain a \emph{tower} of right-most maximum antichains of decreasing size. That is, take the right-most maximum antichain of $G$, then consider the subgraph strictly reached by this antichain. Then take the right-most maximum antichain of this subgraph, and repeat. One thus obtains a tower of at most $k$ antichains. Felsner et~al.~manage to maintain this structure based on an exhaustive combinatorial approach for $k = 2, 3, 4$, with the former two cases leading to $O(|V|)$ time algorithms, and the latter leading to an $O(|V|\log|V|)$ time algorithm. They also state that ``the case $k=5$ already seems to require an unpleasantly involved case analysis''~\cite[p.~359]{felsner2003recognition}. Moreover, the transitivity of the DAG is crucial in this approach, since reachability between two vertices is equivalent to the existence of an edge between them.

In order to break both of these barriers, we need a different and richer structure to maintain. As such, in \Cref{sec:frontier-antichains} we introduce the notion of \emph{frontier antichain}. A frontier antichain is one such that there is no other antichain \emph{of the same size} and ``to the right'' of it (i.e., no one that \emph{dominates} it, see \Cref{def:frontier-antichain}). Thus, the largest frontier antichain is also the (unique) right-most maximum antichain, and gives the width of $G$. Furthermore, since any antichain can take at most one vertex from any path in a path cover, there are at most $O(2^k)$ frontier antichains (\Cref{lem:at-most-2^k}).

In \Cref{sec:maintaining-f-antichains} we prove several combinatorial properties for maintaining all frontier antichains when a new vertex $v$ in the topological order is added. We show that a frontier antichain of the new graph is either of the form $A \cup \{v\}$, where $A$ is a frontier antichain of the old graph (\Cref{type1,all-type1}), or it is an old frontier antichain that is not dominated by a new frontier antichain (\Cref{lemma:frontiers-dominate,type2}). Thus, it suffices to check domination only between all old and new frontier antichains. However, since domination involves checking reachability (and the DAG is not assumed to be transitive), this might require $O(|V|+|E|)$ time, which we want to avoid. As such, in \Cref{sec:reachability-f-antichains} we prove another key ingredient, namely that it is sufficient to know which vertices in the current frontier antichains reach~$v$ (\Cref{fpt-correctness}). If we maintain this information for every added vertex ($O(k2^k)$ per vertex and edge\footnote{As a purely combinatorial inquiry, we leave open the question of whether the union of all frontier antichains of a given DAG has size $O(\poly(k))$ (instead of $O(k2^k)$).}, \Cref{fpt-reduction}) we can answer the queries required to test domination. Finally, in \Cref{sec:algorithm}, we combine these pieces into the main result if this paper, \Cref{alg:fpt}.

\paragraph{Notation and preliminaries.}
We say that a graph $S = (V_S, E_S)$ is a \emph{subgraph} of $G$ if $V_S \subseteq V$ and $E_S \subseteq E$. If $V' \subseteq V$, then $G[V']$ is the subgraph of $G$ \emph{induced} by $V'$, defined as $G[V'] = (V', E_{V'})$, where $E_{V'} = \{(u, v) \in E ~:~ u,v \in V'\}$. A \emph{path} $P$ is a sequence of different vertices $v_1, \ldots , v_{\ell}$ of $G$ such that $(v_i, v_{i+1}) \in E$, for all $i \in \{1, \ldots, \ell-1\}$. We say that a path $P$ is \emph{proper} if $\ell \geq 2$. A \emph{path cover} $\pathcover$ is a set of paths such that every vertex belongs to some path of $\pathcover$. A \emph{cycle} is a proper path allowed to start and end at the same vertex. A \emph{directed acyclic graph (DAG)} is a graph that does not contain cycles. For a DAG $G = (V, E)$ we can find in $O(|V|+|E|)$ time~\cite{kahn1962topological,tarjan1976edge} an order of its vertices $v_1, \ldots, v_{|V|}$ such that for every edge $(v_i, v_j)$, $i < j$, we call such an order a \emph{topological order}. We say that $v$ is reachable from $u$, or equivalently, that $u$ \emph{reaches} $v$, if there exists a path starting at $u$ and ending at $v$. The problem of efficiently answering whether $u$ reaches $v$ is known as \emph{reachability queries}, and if the queries are answered in constant time, \emph{constant-time reachability queries}. An \emph{antichain} $A$ is a set of vertices such that for each $u,v \in A$ $u\not=v$ $u$, does not reach $v$. We say that $A$ \emph{reaches} a vertex $v$ if there exists $u \in A$ such that $u$ reaches $v$. Dilworth's theorem~\cite{dilworth2009decomposition} states that the maximum size of an antichain equals the minimum size of a path cover in a DAG, this size is known as the \emph{width} of the DAG and denoted by $k$. A \emph{partially ordered set (poset)} is a set $P$ and a \emph{partial order} (reflexive, transitive and antisymmetric binary relation) over $P$. If $P$ is finite, then there exists at least one maximal (minimal) element, and every element in the poset is comparable to some maximal (minimal) element~\cite{wallis2016introduction}. A \emph{maximal (minimal)} element of a poset is an element that is not smaller (greater) than any other element.

\section{Frontier Antichains} \label{sec:frontier-antichains}
 
 We start by introducing the concept of \emph{frontier antichains} and show a bound on the number of frontier antichains present in a DAG.
  \begin{definition}[Antichain domination]
    \label{def:antichain-domination}
     Let $A$ and $B$ be antichains of the same size.
     We say that $B$ \emph{dominates} $A$ if for all $b \in B$, $A$ reaches $b$.
 \end{definition}
 
 \begin{algorithm}[t]
    \DontPrintSemicolon
    \SetKwProg{Fn}{Function}{:}{}
    \Fn{$dominates(B, A, \reachabilitystructure)$}{
        $isDominated \gets$ $|A| = |B|$\tcp*[r]{\texttt{true} if $A$ is dominated by $B$}
        \For{$v \in B$}{
            \If(\tcp*[f]{see \Cref{alg:reaches}}){\texttt{not} $reaches(A, v, \reachabilitystructure)$}{
                $isDominated \gets$ \texttt{false}\;
            }
        }
        \Return $isDominated$\;
    }
    \caption{\label{alg:dominates}Function $dominates(B, A, \reachabilitystructure)$ checks if an antichain $B$ dominates an antichain $A$ (\Cref{def:antichain-domination}), assuming that the structure $\reachabilitystructure$ can compute reachability from vertices of $A$ to vertices of $B$. If $reaches(A, v, \reachabilitystructure)$ takes $O(|A|)$ time (see \Cref{alg:reaches}), then this function takes $O(|A||B|) = O(k^2)$ time.}
\end{algorithm}

 
Note that antichains can only dominate other antichains of the same size, since antichains of different size are, by definition, incomparable. \Cref{alg:dominates} shows a function determining whether an antichain dominates another. The following lemma shows that the set of antichains of $G$ with the domination relation form a partial order.
\begin{lemma}
\label{lemma:domination-partial-order}
The antichains of $G$ related with domination (\Cref{def:antichain-domination}) form a partial order.
\end{lemma}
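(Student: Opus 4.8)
The plan is to verify the three defining axioms of a partial order for the domination relation restricted to antichains: reflexivity, antisymmetry, and transitivity. Since \Cref{def:antichain-domination} already restricts domination to antichains of equal size, I would treat antichains of different sizes as simply incomparable, which is consistent with any partial order. The key observation throughout is that $A$ reaches $b$ can mean either $b \in A$ (a vertex trivially reaches itself) or some other vertex of $A$ reaches $b$; this makes reflexivity immediate but requires some care in the other two axioms.

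First I would establish \emph{reflexivity}: for any antichain $A$, every $a \in A$ is reached by $A$ (via the trivial length-zero path from $a$ to itself), so $A$ dominates $A$. Next, for \emph{antisymmetry}, suppose $A$ dominates $B$ and $B$ dominates $A$, with $|A| = |B|$; I want to conclude $A = B$. Here I would argue by contradiction using acyclicity. If $b \in B \setminus A$, then since $A$ dominates $B$, some $a \in A$ reaches $b$; symmetrically, since $B$ dominates $A$, that same $a$ (or the elements of $A$) must be reached back from $B$. The main work is to show that a strict reachability cycle between the two antichains cannot exist in a DAG. Concretely, I expect to set up a matching/injection between $A$ and $B$ induced by the reachability witnesses and argue that any element lying strictly between the two antichains would force a directed cycle, contradicting that $G$ is a DAG; since the sizes are equal, this forces $A = B$.

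For \emph{transitivity}, suppose $B$ dominates $A$ and $C$ dominates $B$, with all three antichains of the same size. I must show $C$ dominates $A$, i.e., that $A$ reaches every $c \in C$. The natural argument chains the reachability witnesses: for each $c \in C$, because $C$ dominates $B$, some $b \in B$ reaches $c$; and because $B$ dominates $A$, some $a \in A$ reaches $b$. By transitivity of reachability (composition of the two witnessing paths), $a$ reaches $c$, so $A$ reaches $c$. Since $c \in C$ was arbitrary, $C$ dominates $A$.

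The main obstacle is \emph{antisymmetry}, since reflexivity and transitivity follow almost directly from the reflexivity and transitivity of the underlying reachability relation. The subtlety is that mutual domination involves reachability in both directions between the two antichains, and I must rule out that this forces a nontrivial directed cycle, which is where acyclicity of $G$ is essential. I would likely argue that the witnesses induce a bijection between $A$ and $B$ (using that both are antichains of the same size, so no vertex of one antichain can reach two distinct vertices of the other without violating the antichain property), and then that any vertex of $B \setminus A$ would sit strictly downstream of some vertex of $A$ while also being reached back, producing a cycle. Establishing this bijection cleanly is the technical heart of the argument.
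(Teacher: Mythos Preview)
Your treatment of reflexivity and transitivity is correct and matches the paper exactly. For antisymmetry, your first instinct---pick $b \in B \setminus A$, find $a \in A$ reaching $b$, then find some vertex of $B$ reaching $a$ back---is precisely the paper's argument, and it finishes in two lines without any bijection: the resulting chain $b' \to a \to b$ (with $b' \in B$, $a \ne b$) is a proper path, so either $b' = b$ gives a cycle in a DAG, or $b' \neq b$ gives two comparable elements of $B$, contradicting that $B$ is an antichain. Hence $B \setminus A = \emptyset$, and $|A|=|B|$ forces $A=B$.

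Your detour into constructing a bijection is unnecessary, and the justification you sketch for it is actually false: a single vertex of one antichain \emph{can} reach two distinct vertices of the other without violating any antichain property (e.g., $a$ with out-edges to two incomparable sinks $b_1, b_2$). So the ``witnesses induce a bijection'' step would not go through as stated, and what you flag as ``the technical heart'' is a phantom difficulty. Fortunately you do not need it; the two-hop contradiction above is the whole argument. (Also a minor slip: when you write ``since $A$ dominates $B$, some $a \in A$ reaches $b$'', by \Cref{def:antichain-domination} this follows from $B$ dominating $A$, not the other way around---but since you assume both directions, this is only a labeling issue.)
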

\begin{proof}
     Clearly, domination is reflexive and transitive (inherited by the transitivity of reachability between vertices).  We argue that it is also antisymmetric: suppose $A$ and $B$ are antichains such that $A$ dominates $B$ and $B$ dominates $A$. Suppose by contradiction that there exists $b \in B \setminus A$. Since $B$ dominates $A$, there exists $a \in A$ such that $a$ reaches $b$ (note $a \ne b$). Since $A$ dominates $B$, there exists $b' \in B$ such that $b'$ reaches $a$. Thus, there is a proper path from $b'$ to $a$ to $b$ in $G$.  If $b' = b$, this implies a cycle exists in a DAG, a contradiction.  If $b' \ne b$, this implies $B$ is not an antichain, a contradiction. Thus, $B \setminus A = \emptyset$ and $A=B$, since $|A|=|B|$. Thus, domination is also antisymmetric.
\qed\end{proof}

 \begin{definition}[Frontier antichains]
     \label{def:frontier-antichain}
     Frontier antichains are the maximal elements of the domination partial order i.e., those antichains that are not dominated by any other antichain.
 \end{definition}

\begin{figure}[t]
    \centering
    \includegraphics[scale=0.7]{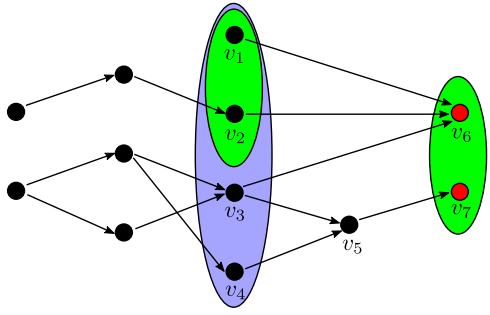}
    \caption{A DAG and all its frontier antichains of size $1, 2$ and $4$, as colored sets. The sub-indices represent a topological order. The unique maximum-size frontier antichain is $\{v_1, v_2, v_3, v_4\}$, and is also the right-most maximum antichain. There are $2$ frontier antichains of size $2$, $\{v_1, v_2\}$ and $\{v_6, v_7\}$. The frontier antichains of size $1$ are $\{v_6\}$ and $\{v_7\}$. Frontier antichains of size $3$ are not highlighted, there are $3$ of them, $\{v_1, v_2, v_7\}$, $\{v_1, v_3, v_4\}$ and $\{v_2, v_3, v_4\}$.}
    \label{fig:frontier-antichains}
\end{figure}

\Cref{fig:frontier-antichains} shows frontier antichains of an example graph. The next lemma establishes that frontier antichains dominate all antichains of the graph i.e., every  non-frontier antichain is dominated by some frontier antichain (thus of the same size).

 \begin{lemma}\label{lemma:frontiers-dominate}
Let $A$ be a  non-frontier antichain of $G$. Then, there exists a frontier antichain dominating $A$.
\end{lemma}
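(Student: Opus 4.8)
The plan is to treat this as an order-theoretic statement and exploit the structure established in \Cref{lemma:domination-partial-order}. Since domination relates only antichains of equal size, the antichains of size exactly $|A|$ form, on their own, a finite partial order, and by \Cref{def:frontier-antichain} the frontier antichains of that size are precisely its maximal elements. So the goal becomes: starting from the non-maximal element $A$, exhibit a maximal element of this finite poset lying above it in the domination order.

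I would argue this constructively. By \Cref{def:frontier-antichain}, an antichain is non-frontier exactly when it is dominated by some \emph{distinct} antichain (reflexivity makes self-domination trivial). Thus, since $A$ is non-frontier, there is an antichain $A_1 \neq A$ dominating $A$. If $A_1$ is frontier we are done; otherwise the same observation gives an $A_2 \neq A_1$ dominating $A_1$, and so on. By transitivity of domination (\Cref{lemma:domination-partial-order}) each $A_i$ dominates $A$, and by antisymmetry the elements $A, A_1, A_2, \ldots$ are pairwise distinct, forming a strictly increasing chain in the domination order. Because there are only finitely many antichains of size $|A|$, this chain must terminate, and its last element is an antichain $B$ that dominates $A$ and is dominated by no distinct antichain, i.e., a frontier antichain. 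Equivalently, this is exactly the finite-poset fact recalled in the preliminaries, that every element is comparable to (here, lies below) some maximal element.

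I do not expect a genuine obstacle here; the only point requiring care is the termination of the chain, which relies on combining antisymmetry (to guarantee strictness, so that no element repeats) with the finiteness of the set of size-$|A|$ antichains. A secondary subtlety is to confirm the orientation of the relation, namely that the final element $B$ dominates $A$ rather than the reverse: this is immediate from the constructive direction, but could also be derived by ruling out the possibility that $A$ dominates a maximal $B$, using antisymmetry together with $A \neq B$ (which holds since $A$ is non-frontier while $B$ is frontier).
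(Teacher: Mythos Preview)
Your proof is correct and follows essentially the same approach as the paper: both arguments observe that the antichains under domination form a finite poset (\Cref{lemma:domination-partial-order}) and invoke the fact that every element of a finite poset lies below some maximal element. You simply unfold this last fact as an explicit ascending-chain argument, whereas the paper cites it directly.
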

\begin{proof}
    Since there are a finite number of antichains of $G$, the antichains with the domination relation form a finite poset (\Cref{lemma:domination-partial-order}), therefore every element of this poset (i.e.,~antichain) is less than or equal to (i.e.,~is dominated by) a maximal element (i.e.,~a frontier antichain).
\qed\end{proof}

Now we show that the number of such antichains only grows with $k$, thus there is no problem for our complexity bound to maintain them all. The following lemma shows that there are at most $2^k$ frontier antichains. The main idea is that there cannot be more than one frontier antichain whose vertices belong to the same set of paths in a minimum path cover of $G$.
\begin{lemma}
    \label{lem:at-most-2^k}
    If $G = (V, E)$ is a DAG of width $k$, then $G$ has at most $2^k$ frontier antichains.
\end{lemma}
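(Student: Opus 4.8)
The plan is to fix, via Dilworth's theorem, a minimum path cover $\pathcover = \{P_1, \ldots, P_k\}$ of $G$, and to charge each frontier antichain to the set of paths it meets. Since any two vertices lying on a common path $P_i$ are comparable, every antichain $A$ contains at most one vertex of each $P_i$; hence $A$ determines a subset $S(A) \subseteq \{1, \ldots, k\}$, namely the indices of the paths it intersects, and $|A| = |S(A)|$. There are only $2^k$ such subsets, so it suffices to prove that no two distinct frontier antichains share the same subset; this immediately yields the bound of $2^k$.

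To prove this injectivity, I would argue by contradiction: suppose $A \ne B$ are frontier antichains with $S(A) = S(B) = S$. For each $i \in S$ write $a_i$ and $b_i$ for the unique vertices of $A$ and $B$ on $P_i$, and let $c_i$ be whichever of $a_i, b_i$ occurs later along $P_i$. Along the path $P_i$ the earlier vertex reaches the later one, so both $a_i$ and $b_i$ reach $c_i$. I claim $C = \{c_i : i \in S\}$ is an antichain. Granting this, $C$ has the same size as $A$ and $B$, and since $A$ reaches every $c_i$ it follows that $C$ dominates $A$; likewise $C$ dominates $B$. As $A \ne B$, there is some $i$ with $a_i \ne b_i$, and for that index $c_i$ equals exactly one of them, so $C$ is distinct from at least one of $A, B$. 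That antichain is then dominated by a \emph{different} antichain of the same size, contradicting \Cref{def:frontier-antichain}.

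The hard part will be verifying that $C$ is an antichain, which is the genuine combinatorial content. I would do this by contradiction as well: if $c_i$ reached $c_j$ for some $i \ne j$, choose the letter $x \in \{a, b\}$ with $c_j = x_j$ (i.e.\ according to whether $c_j$ is the $A$-vertex or the $B$-vertex on $P_j$). Since the vertex $x_i$ of the same antichain $X \in \{A, B\}$ reaches $c_i$ along $P_i$, composing reachabilities gives $x_i \to c_i \to c_j = x_j$ with $i \ne j$, so $x_i$ reaches $x_j$, contradicting that $X$ is an antichain. This ``take the later vertex on each path'' construction is essentially the join in the lattice of antichains; the only points requiring care are that it remains an antichain and that it strictly moves at least one of $A, B$, both of which follow from the argument above.
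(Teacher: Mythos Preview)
Your proof is correct and essentially identical to the paper's: both fix a minimum path cover and show that two distinct frontier antichains cannot occupy the same set of paths by building the ``take the later vertex on each path'' antichain (your $C$, the paper's $M$) and observing that it dominates both. One minor wrinkle: paths in a cover may share vertices, so the set $S(A)$ of \emph{all} intersected path indices can satisfy $|S(A)| > |A|$; the easy fix---implicit in the paper's phrasing ``size-$\ell$ antichain whose vertices come from those $\ell$ paths''---is to assign each vertex of $A$ to a single containing path, after which your argument goes through verbatim.
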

\begin{proof}
    By Dilworth's theorem~\cite{dilworth2009decomposition}, there exists a path cover of $G$ of size $k$, $\pathcover = \{P_1, \ldots, P_k\}$. Since any antichain can take at most one vertex from each of those paths, we show that for every size-$\ell$ subset of paths of $\pathcover$, there is at most one frontier antichain of size $\ell$ whose vertices come from those paths, and thus there are at most $2^k$ frontier antichains. Without loss of generality consider the subset of paths $P_1, \ldots, P_{\ell}$, and suppose by contradiction that there are two frontier antichains $A$ and $B$, $A \not = B$, $|A|=|B|=\ell$, whose vertices come from $P_1, \ldots, P_{\ell}$. Let us label the vertices in these antichains by the path they belong to. Namely, $A = \{a_1,\dots,a_{\ell}\}$, $B = \{b_1,\dots,b_{\ell}\}$, with $a_i$ and $b_i$ in $P_i$ for all $i \in \{1,\ldots, \ell\}$. We define the following set of vertices:
    \begin{align*}
        M :=  \left\{m_i := \left(b_i,\text{ if } a_i\text{ reaches }b_i\text{, and } a_i \text{ otherwise}\right) ~\mid~ i \in \{1,\ldots, \ell\}\right\}.
    \end{align*}
    First, note that if $m_i = a_i$, then $b_i$ reaches $a_i$, because $a_i$ and $b_i$ appear on the same path $P_i$. Next, note that $M$ is an antichain of size $\ell$. Otherwise, if there exists $m_i$ that reaches $m_j$ ($i \neq j)$, then without loss of generality, suppose that $m_i = a_i$ and $m_j = b_j$. Since $m_i = a_i$, we have that $b_i$ reaches $a_i$, and thus it reaches $b_j$, which contradicts $B$ being an antichain. Second, note that $M \neq A$, since otherwise $A$ would dominate $B$. Finally, $M$ dominates $A$, since for all $m_i \in M$ there exists $a_i\in A$ such that $a_i$ reaches $m_i$.
\qed\end{proof}

\section{Maintaining frontier antichains}
\label{sec:maintaining-f-antichains}

Our algorithm will process the vertices in topological order $v_1,\ldots, v_{|V|}$, and maintain all \emph{frontier} antichains (\Cref{def:frontier-antichain}) of the current subgraph $G_i := G[\{v_1,\ldots, v_i\}]$ (we say that $G_0 = (\emptyset, \emptyset)$). The following property allows us to upper bound the width of each of these induced subgraphs by the width $k$ of the original graph.

\begin{property}
    \label{topologicalDoNotIncreaseWidth}
    Let $G = (V, E)$ be a DAG of width $k$, and $v_1, \ldots , v_{|V|}$ a topological order of its vertices. Then, for all $i, j \in \{1,\ldots, |V|\}, i \le j$, the width of $G_{i,j} := G[\{v_{i}, \ldots , v_{j}\}]$ is at most $k$.
\end{property}
\begin{proof}
     We first show that the intersection of any path of $G$ with the vertices of $G_{i, j}$ is a path in $G_{i, j}$. Consider a path $P$, and remove from it all the vertices from $V\setminus \{v_i, \ldots, v_j\}$. Thus, we obtain a (possibly empty) sequence $P_{i, j}$ of vertices from $\{v_i, \ldots, v_j\}$. We say that $P_{i, j}$ is the \emph{intersection} of $P$ with $G_{i, j}$. Since $G$ is a DAG, $P_{i, j}$ is a sequence of consecutive vertices in $P$ (otherwise, if it is not empty, we would have a vertex of smaller (bigger) topological index that is reached by $v_i$ (reaches $v_j$)), and therefore a path in $G$. Since $P_{i, j}$ only contains vertices from $\{v_i, \ldots, v_j\}$ and $G_{i, j}$ is an induced subgraph, $P_{i, j}$ is a path also in $G_{i, j}$.

     By Dilworth's theorem~\cite{dilworth2009decomposition}, there exists a path cover of $G$ of size $k$, $\pathcover = \{P_1, \ldots, P_k\}$. The intersection of each of those paths with $G_{i, j}$ forms a path cover of $G_{i, j}$, whose size is at least the width of $G_{i, j}$.
\qed\end{proof}

We say that an antichain is \emph{$G_i$-frontier} if it is a frontier antichain in the graph $G_i$. The following two lemmas will show us how these frontier antichains evolve when processing the vertices of the graph i.e., when passing from $G_{i-1}$ to $G_{i}$.

\begin{lemma}[Type 1]
 \label{type1}
 For every $i \in \{1,\ldots, |V|\}$, let $A$ be a $G_i$-frontier antichain with $v_i \in A$. Then $A\setminus \{v_i\}$ is a $G_{i-1}$-frontier antichain.
\end{lemma}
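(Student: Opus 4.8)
The plan is to argue by contradiction, reducing the question of whether $A \setminus \{v_i\}$ is frontier in $G_{i-1}$ to the assumed fact that $A$ is frontier in $G_i$. The crucial structural observation I would exploit throughout is that $v_i$ is a \emph{sink} of $G_i$: since $v_1,\ldots,v_{|V|}$ is a topological order, $v_i$ has no outgoing edge in $G_i$, so no path between two vertices of $G_{i-1}$ can pass through $v_i$. Hence reachability among the vertices of $G_{i-1}$ is identical whether computed in $G_{i-1}$ or in $G_i$. As an immediate first step, this gives that $A' := A \setminus \{v_i\}$ is an antichain of $G_{i-1}$: it is a subset of the antichain $A$, all of its vertices lie in $G_{i-1}$, and the reachability relation among them is unchanged by deleting the sink $v_i$.

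It remains to show that $A'$ is \emph{maximal} in the domination order of $G_{i-1}$. Suppose not; then by \Cref{def:frontier-antichain} (equivalently, via \Cref{lemma:frontiers-dominate}) there is an antichain $B'$ of $G_{i-1}$ with $B' \neq A'$, $|B'| = |A'|$, that dominates $A'$. My goal is to ``lift'' $B'$ to an antichain $B := B' \cup \{v_i\}$ of $G_i$ that dominates $A = A' \cup \{v_i\}$ and differs from it, contradicting the maximality of $A$ in $G_i$.

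The step I expect to be the main obstacle, and the heart of the argument, is verifying that $B = B' \cup \{v_i\}$ really is an antichain of $G_i$. Since $v_i$ is a sink it reaches nothing, and reachability among the vertices of $B' \subseteq G_{i-1}$ is unaffected, so the only way $B$ could fail to be an antichain is if some $b' \in B'$ reaches $v_i$. Here I would combine the two hypotheses: because $B'$ dominates $A'$, some $a' \in A'$ reaches $b'$, whence by transitivity $a'$ reaches $v_i$; but $a'$ and $v_i$ are distinct vertices both lying in the antichain $A$, so $a'$ cannot reach $v_i$ — a contradiction. Thus no such $b'$ exists and $B$ is indeed an antichain.

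With $B$ an antichain of size $|A|$, the remaining bookkeeping is routine: $B$ dominates $A$ because every $b' \in B'$ is reached by $A' \subseteq A$, while $v_i$ is reached by itself (reachability being reflexive); and $B \neq A$ since $B' \neq A'$ and both $B$ and $A$ are obtained by adjoining $v_i$ to $B'$ and $A'$, respectively. Hence $A$ would be dominated by a strictly larger antichain of $G_i$, contradicting that $A$ is $G_i$-frontier. This contradiction establishes that $A'$ is $G_{i-1}$-frontier, completing the proof.
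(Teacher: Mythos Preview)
Your proof is correct and follows essentially the same approach as the paper's: argue by contradiction, take a dominating antichain $B'$ of $A\setminus\{v_i\}$ in $G_{i-1}$, show $B'\cup\{v_i\}$ is an antichain via the key observation that any $b'\in B'$ reaching $v_i$ would force some $a'\in A\setminus\{v_i\}$ to reach $v_i$, and conclude it dominates $A$ in $G_i$. One wording slip in your last paragraph: ``dominated by a strictly larger antichain'' should read ``dominated by a distinct antichain (of the same size)'' --- domination is only defined between antichains of equal cardinality, and indeed $|B|=|A|$.
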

\begin{proof}
    Otherwise, there would exist another antichain $B$ dominating $A\setminus \{v_i\}$ in $G_{i-1}$. Consider $B \cup \{v_i\}$, which is an antichain (otherwise $B$ would reach $v_i$, a contradiction, since $B$ dominates $A\setminus \{v_i\}$, and $A$ is an antichain). Finally, note that $B \cup \{v_i\}$ dominates $A$ in $G_i$, which is a contradiction since $A$ is $G_i$-frontier antichain.
\qed\end{proof}

\begin{lemma}[Type 2]
\label{type2}
 For every $i \in \{1,\ldots, |V|\}$, let $A$ be a $G_i$-frontier antichain with $v_i \not \in A$. Then $A$ is a $G_{i-1}$-frontier antichain.
\end{lemma}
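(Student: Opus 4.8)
The plan is to exploit the special role of $v_i$ in $G_i$: since $v_1,\ldots,v_{|V|}$ is a topological order, every out-neighbor of $v_i$ in $G$ appears later than $v_i$ and hence lies outside $G_i$, so $v_i$ is a sink of $G_i$. The immediate consequence I would record first is a \emph{reachability-preservation} fact: for any two vertices of $G_{i-1}$, no path between them in $G_i$ can pass through $v_i$ (a path could only enter $v_i$, never leave it), so reachability restricted to the vertex set of $G_{i-1}$ coincides in $G_i$ and in $G_{i-1}$. This single observation is the engine of the argument, and I expect it to be the only place where acyclicity and the topological order are genuinely used.

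With reachability preservation in hand, I would first certify that $A$ is an antichain of $G_{i-1}$. Since $v_i \notin A$, all vertices of $A$ already lie in $G_{i-1}$, and pairwise non-reachability in $G_i$ transfers verbatim to $G_{i-1}$. Then I would argue by contradiction using \Cref{def:frontier-antichain}: if $A$ were not a $G_{i-1}$-frontier antichain, there would exist an antichain $B \neq A$ of $G_{i-1}$ dominating $A$ (necessarily with $|B| = |A|$).

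The crux is then to \emph{lift} this hypothetical dominator $B$ from $G_{i-1}$ back up to $G_i$. Because both $B$ and $A$ use only vertices of $G_{i-1}$, and reachability between such vertices is identical in the two graphs, $B$ remains an antichain in $G_i$ and continues to dominate $A$ in $G_i$: for each $b \in B$, the vertex of $A$ that reaches $b$ in $G_{i-1}$ also reaches it in $G_i$. This exhibits a dominator of $A$ in $G_i$ that is distinct from $A$, contradicting the hypothesis that $A$ is a $G_i$-frontier antichain, and completing the proof.

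I anticipate no serious obstacle. Unlike \Cref{type1}, here nothing has to be constructed or augmented with $v_i$; the only subtlety is to state reachability preservation cleanly and to apply it in both directions, namely downward (so that $A$ stays an antichain in the smaller graph) and upward (so that the putative dominator $B$ transfers to the larger graph).
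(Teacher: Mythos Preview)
Your proposal is correct and follows essentially the same approach as the paper's own proof: assume a dominator $B$ of $A$ exists in $G_{i-1}$ and observe that $B$ also dominates $A$ in $G_i$, contradicting that $A$ is $G_i$-frontier. The paper compresses this into one line, while you spell out the reachability-preservation fact (that $v_i$ is a sink of $G_i$) which justifies why $B$ remains an antichain and a dominator in $G_i$; your version is more explicit but not substantively different.
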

\begin{proof}
    Otherwise, there would exist another antichain $B$ dominating $A$ in $G_{i-1}$, and also in $G_i$, which is a contradiction.
\qed\end{proof}

Looking at these two lemmas, we establish two types of $G_i$-frontier antichains: the ones containing $v_i$, called of \emph{type 1}, and the ones that are also $G_{i-1}$-frontier antichains, called of \emph{type 2}. We handle these two cases separately. First, we find all type-1 frontier antichains, then all of type~2.

Type-1 $G_i$-frontier antichains are made up of one $G_{i-1}$-frontier antichain and vertex $v_i$. A first requirement for a $G_{i-1}$-frontier antichain, $A$, to be a subset of a type-1 $G_{i}$-frontier antichain is that $A$ does not reach $v_i$. We now show that this is enough to ensure that $A \cup \{v_i\}$ is a $G_i$-frontier antichain.

\begin{lemma}
\label{all-type1}
 For every $i \in \{1,\ldots, |V|\}$, let $A$ be a $G_{i-1}$-frontier antichain not reaching $v_i$. Then $A \cup \{v_i\}$ is a $G_i$-frontier antichain.
\end{lemma}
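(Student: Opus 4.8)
The plan is to first verify that $A \cup \{v_i\}$ is a genuine antichain of $G_i$, and then argue by contradiction that it is \emph{frontier}. For the antichain part, I would note that $A$ is an antichain of $G_{i-1}$, so its vertices lie among $v_1, \ldots, v_{i-1}$; since $v_1, \ldots, v_{|V|}$ is a topological order, $v_i$ cannot reach any such vertex, and by hypothesis $A$ does not reach $v_i$. Hence no two vertices of $A \cup \{v_i\}$ are comparable, and in particular $v_i \notin A$ gives $|A \cup \{v_i\}| = |A| + 1$.

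For frontierness, I would suppose toward a contradiction that some antichain $B$ of $G_i$ with $|B| = |A|+1$ and $B \neq A \cup \{v_i\}$ dominates $A \cup \{v_i\}$. The key structural fact to exploit is that $v_i$ is a \emph{sink} of $G_i$ (every edge leaving $v_i$ would point to a later vertex, absent from $G_i$), so $v_i$ reaches only itself. Consequently, for every $b \in B$ with $b \neq v_i$, the fact that $A \cup \{v_i\}$ reaches $b$ forces $A$ itself to reach $b$. Thus $A$ reaches all of $B \setminus \{v_i\}$.

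From here I would split into two cases. If $v_i \in B$, then $B \setminus \{v_i\}$ has size $|A|$ and dominates $A$ (since $A$ reaches each of its vertices, per \Cref{def:antichain-domination}); as $A$ is a maximal element of the domination partial order (\Cref{lemma:domination-partial-order}, \Cref{def:frontier-antichain}), this forces $B \setminus \{v_i\} = A$, i.e.\ $B = A \cup \{v_i\}$, contradicting $B \neq A \cup \{v_i\}$.

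The delicate case, which I expect to be the main obstacle, is $v_i \notin B$. Then $A$ reaches \emph{all} $|A|+1$ vertices of the antichain $B$, even though $|A| < |B|$; a naive counting argument fails here, since a single vertex may reach several pairwise incomparable vertices, so the frontier property of $A$ must be used essentially. The idea is that \emph{every} size-$|A|$ subset $M \subseteq B$ is an antichain that dominates $A$, hence by maximality of $A$ must equal $A$. Since $|B| = |A|+1 \geq 2$ (the degenerate case $A = \emptyset$ being immediate, as the empty set reaches nothing yet would have to reach some vertex of $B$), there are at least two \emph{distinct} such subsets $M$, all forced to coincide with the single set $A$ -- a contradiction. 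This rules out any dominating $B$, establishing that $A \cup \{v_i\}$ is $G_i$-frontier.
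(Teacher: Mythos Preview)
Your proof is correct and follows essentially the same approach as the paper's: both exploit that $v_i$ is a sink of $G_i$, split according to whether $v_i \in B$, and in each case extract a size-$|A|$ antichain dominating $A$ to contradict its $G_{i-1}$-frontier status. Your write-up is in fact slightly more explicit than the paper's in two places---you spell out why $A\cup\{v_i\}$ is an antichain, and in the case $v_i\in B$ you invoke the partial-order maximality to conclude $B\setminus\{v_i\}=A$ rather than leaving the ``$\neq A$'' implicit---but the underlying argument is the same.
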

\begin{proof}
    If $A = \emptyset$, then $A \cup \{v_i\} = \{v_i\}$ is frontier antichain, because $v_i$ is a sink of $G_i$. Otherwise $A \ne \emptyset$ and, by contradiction, take another antichain $B$ dominating $A \cup \{v_i\}$ in $G_i$. Suppose that $v_i\in B$, then for all $b \in B$ there exists $a \in A \cup \{v_i\}$ such that $a$ reaches $b$, but since $v_i$ is a sink of $G_i$, for all $b \in B\setminus\{v_i\}$ there exists $a \in A$ such that $a$ reaches $b$ i.e., $B\setminus\{v_i\}$ dominates $A$ in $G_{i-1}$, which is a contradiction. If $v_i \not \in B$, then every vertex of $B$ is reached by a vertex of $A$ (it cannot be reached by $v_i$ since it is a sink in $G_i$), and therefore take any subset of $B$ of size $|A|$ different from $A$, which would dominate $A$, a contradiction.
\qed\end{proof}

We use this lemma to find all type-1 $G_i$-frontier antichains by testing reachability from $G_{i-1}$-frontier antichains to $v_i$, with $O(k2^k)$ reachability queries in total.

Type-2 $G_i$-frontier antichains are $G_{i-1}$-frontier antichains that are not dominated by any antichain in $G_i$ containing $v_i$ (this is sufficient since they are frontier in $G_{i-1}$). Moreover, by \Cref{lemma:frontiers-dominate}, if a $G_{i-1}$-frontier antichain is dominated in $G_i$, then it is dominated by a $G_i$-frontier antichain. Therefore, type-2 $G_i$-frontier antichains are $G_{i-1}$-frontier antichains that are not dominated by any type-1 $G_i$-frontier antichain. For every $G_{i-1}$-frontier antichain $A$ we check if there exists a type-1 $G_i$-frontier antichain dominating $A$. We can do this in total $O(k^24^k)$ reachability queries from vertices in $G_{i-1}$-frontier antichains to vertices in $G_{i-1}$-frontier antichains and $v_{i}$. 

Both type-1 and type-2 $G_i$-frontier antichains need answering reachability queries efficiently among vertices in $G_{i-1}$-frontier antichains and $v_i$. Next, we show how to maintain constant-time reachability queries among these vertices in $\reachabilitycomplexity$ time per vertex and edge.

\section{Reachability between frontier antichains}
\label{sec:reachability-f-antichains}

To complete our algorithm, we aim to maintain reachability queries among all vertices in $G_{i-1}$-frontier antichains and $v_i$. For this we rely on properties of the support of the frontier antichains, as detailed next.

\begin{definition}[Support]
    For every $i \in \{0, 1,\ldots, |V|\}$, we define the \emph{support} $S_i$ of $G_{i}$ as the set of all vertices belonging to some $G_{i}$-frontier antichain, that is,
    \begin{align*}
        S_i := \bigcup_{\text{$A$~:~$G_i$-frontier\ antichain}} A.
    \end{align*}
\end{definition}

Note that since $G_0 = (\emptyset, \emptyset)$, then $S_0 = \emptyset$, and \Cref{lem:at-most-2^k} implies $|S_i| = O(k2^k)$. Also, $v_i \in S_i$, since $\{v_i\}$ is a $G_i$-frontier antichain. In \Cref{fig:frontier-antichains}, the vertex $v_5$ belongs to the support of $S_5$, but it does not belong to $S_7$, because there is no frontier antichain containing it. Another interesting fact is that if a vertex exits the support in some step, then it cannot re-enter. This is formalized as follows.

\begin{lemma}
\label{staticsupport}
 Let $v \in \{v_1, \ldots, v_i\}$. If $v \not \in S_i$, then $v\not \in S_j$ for all $j\in\{i,\ldots, |V|\}$.
\end{lemma}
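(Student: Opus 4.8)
The plan is to prove the contrapositive via a monotonicity argument on frontier antichains. Specifically, I want to show that if $v \in S_j$ for some $j > i$, then $v \in S_i$ already. Since a vertex is in the support exactly when it belongs to some frontier antichain, it suffices to take a $G_j$-frontier antichain $A$ containing $v$ and produce a $G_i$-frontier antichain containing $v$.

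First I would restrict attention to a single step, showing that if $v \in S_j$ then $v \in S_{j-1}$ (for $v \neq v_j$, which is automatic here since $v \in \{v_1,\dots,v_i\}$ and $i < j$); the full statement then follows by induction down from $j$ to $i$. So fix a $G_j$-frontier antichain $A$ with $v \in A$. Since $v \neq v_j$, the antichain $A$ is of \emph{type 2} with respect to the step from $G_{j-1}$ to $G_j$ \emph{if} $v_j \notin A$, and in that case \Cref{type2} directly gives that $A$ is already a $G_{j-1}$-frontier antichain, so $v \in S_{j-1}$ and we are done. The remaining case is $v_j \in A$, i.e.\ $A$ is a \emph{type-1} $G_j$-frontier antichain; here \Cref{type1} tells us that $A \setminus \{v_j\}$ is a $G_{j-1}$-frontier antichain. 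Since $v \neq v_j$, we have $v \in A \setminus \{v_j\}$, and therefore $v$ belongs to a $G_{j-1}$-frontier antichain, giving $v \in S_{j-1}$.

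Combining the two cases, whenever $v \in S_j$ and $v \neq v_j$ we conclude $v \in S_{j-1}$. Taking the contrapositive, $v \notin S_{j-1}$ (together with $v \neq v_j$) forces $v \notin S_j$. Applying this repeatedly, starting from the hypothesis $v \notin S_i$ and using that $v \neq v_{i+1}, v_{i+2}, \dots$ (all of which hold because $v$ has topological index at most $i$), we obtain $v \notin S_j$ for every $j \in \{i, \dots, |V|\}$.

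I expect the only subtle point — and hence the main thing to state carefully — is the case split on whether $v_j \in A$, together with the observation that in \emph{both} the type-1 and type-2 cases the vertex $v$ survives into a $G_{j-1}$-frontier antichain precisely because $v \neq v_j$. The heavy lifting is entirely done by \Cref{type1} and \Cref{type2}; no new reachability or domination reasoning is needed, so the proof should be short once the inductive single-step reduction is set up correctly.
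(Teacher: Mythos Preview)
Your proposal is correct and follows essentially the same approach as the paper: both arguments do a single-step induction using the case split on whether $v_j$ (resp.\ $v_{j+1}$) lies in the frontier antichain $A$, invoking \Cref{type1} and \Cref{type2} exactly as you do. The only cosmetic difference is that the paper phrases the inductive step as a proof by contradiction (assume $v\notin S_j$ but $v\in S_{j+1}$), whereas you first establish the monotonicity step $v\in S_j \Rightarrow v\in S_{j-1}$ and then take the contrapositive.
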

\begin{proof}
    By induction on $j$. The base case $j = i$ is the hypothesis itself. Now, suppose that $v\not \in S_j$ for some $j \in \{i, \ldots, |V|-1\}$, and suppose by contradiction that $v \in S_{j+1}$. Then $v \in A$, for some $G_{j+1}$-frontier antichain $A$. If $v_{j+1} \not \in A$, then by \Cref{type2}, $A$ is a $G_{j}$-frontier antichain, and $v \in S_{j}$, which is a contradiction. But if $v_{j+1} \in A$, then by \Cref{type1}, $A\setminus \{v_{j+1}\}$ is a $G_{j}$-frontier antichain, and $v \in A\setminus \{v_{j+1}\} \subseteq S_j$, a contradiction.
\qed\end{proof}

\begin{lemma}
\label{staticsupport2}
 Let $v_i \in \{v_1, \ldots, v_j\}$. If $v_i \in S_{j}$, then $v_i \in S_t$ holds for all $t\in \{i,\ldots, j\}$.
\end{lemma}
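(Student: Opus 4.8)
The plan is to derive this lemma as an immediate corollary of \Cref{staticsupport}, which already tells us that once a vertex leaves the support it never returns. Combined with the fact that $v_i \in S_i$ (since $\{v_i\}$ is a $G_i$-frontier antichain), \Cref{staticsupport} says that the set of time steps at which $v_i$ belongs to the support is ``downward closed from the right'': it is an initial segment starting at $i$. The present statement is exactly the assertion that this segment has no internal gaps between $i$ and any later step $j$ at which $v_i$ is still present, so the whole argument reduces to invoking \Cref{staticsupport} with a suitably shifted index.

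Concretely, I would argue by contradiction. Assume $i \le j$ and $v_i \in S_j$, but suppose there is some intermediate step $t \in \{i,\ldots,j\}$ with $v_i \notin S_t$. The key step is to apply \Cref{staticsupport} not at step $i$ but at step $t$: since $i \le t$, the vertex $v_i$ lies in $\{v_1,\ldots,v_t\}$, so the hypothesis $v_i \notin S_t$ of that lemma is met. \Cref{staticsupport} then yields $v_i \notin S_{j'}$ for every $j' \in \{t,\ldots,|V|\}$. Because $t \le j$, the index $j$ falls in this range, giving $v_i \notin S_j$, which contradicts the assumption $v_i \in S_j$. Hence no such $t$ exists, and $v_i \in S_t$ for all $t \in \{i,\ldots,j\}$, as claimed.

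There is no real obstacle here; the only thing to be careful about is the bookkeeping of indices, namely checking that $v_i$ is a legitimate input to \Cref{staticsupport} when that lemma is instantiated at step $t$ (which holds precisely because $i \le t$) and that $j$ lies in the ``forbidden'' range $\{t,\ldots,|V|\}$ produced by the lemma (which holds because $t \le j$). Once these inequalities are lined up, the contradiction is immediate. Intuitively, this lemma and \Cref{staticsupport} together certify that each vertex's membership in the support is an interval of consecutive steps beginning at the step it is introduced, a property we will rely on when maintaining the reachability structure for vertices of the frontier antichains.
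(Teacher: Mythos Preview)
Your proof is correct and is essentially the same contradiction argument as the paper's: assume some intermediate $t$ with $v_i\notin S_t$, then invoke \Cref{staticsupport} at step $t$ to conclude $v_i\notin S_j$, contradicting the hypothesis. The only cosmetic difference is that the paper restricts the bad index to $t\in\{i+1,\ldots,j-1\}$ (since the endpoints are handled by $v_i\in S_i$ and the hypothesis $v_i\in S_j$), whereas you take $t\in\{i,\ldots,j\}$; this changes nothing.
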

\begin{proof}
    If this is not true, we have that there exists some $t \in \{i+1, \ldots, j-1\}$ such that $v_i \not \in S_t$, which is a contradiction with $v_{i} \in S_j$ and \Cref{staticsupport}.
\qed\end{proof}

We now state that it is sufficient to support reachability queries from every $S_{j-1}$ to $v_j$ to answer queries among vertices in $S_{i-1}$ and $v_i$. Then, we show how to maintain these reachability relations in $\reachabilitycomplexity$ time per vertex and edge.

\begin{theorem}
    \label{fpt-reduction}
    If we know reachability from $S_{j-1}$ to $v_j$ for all $j\in[1...i]$, then we can answer reachability queries among vertices in $S_{i-1} \cup \{v_i\}$.
\end{theorem}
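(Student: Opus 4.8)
The plan is to reduce an arbitrary reachability query among vertices of $S_{i-1} \cup \{v_i\}$ to a single lookup in the information we have stored. Fix two vertices $v_a, v_b \in S_{i-1} \cup \{v_i\}$ and ask whether $v_a$ reaches $v_b$. Since any path respects the topological order, if $a > b$ then $v_a$ cannot reach $v_b$ and we answer ``no'', while if $a = b$ the answer is trivially ``yes''. Hence the only interesting case is $a < b$, on which I would focus.

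When $a < b \le i$, I claim the answer is precisely the value recorded at step $j = b$, namely whether $v_a \in S_{b-1}$ reaches $v_b$. For this lookup to be meaningful I must first show that $v_a$ actually lies in $S_{b-1}$. Since $a < b \le i$ we have $a < i$, so $v_a \ne v_i$ and therefore $v_a \in S_{i-1}$. I then invoke the contiguity of support membership established in \Cref{staticsupport2}: because $v_a \in S_{i-1}$ and $a \le i-1$, the vertex $v_a$ belongs to every $S_t$ with $t \in \{a, \ldots, i-1\}$. As $a \le b-1 \le i-1$ (the first inequality from $a < b$, the second from $b \le i$), we obtain $v_a \in S_{b-1}$, exactly as needed. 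Consequently the stored reachability information from $S_{b-1}$ to $v_b$ contains the bit deciding whether $v_a$ reaches $v_b$, which settles the query.

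The conceptual heart of the argument, and the step I would single out as the main obstacle, is guaranteeing that a vertex we want to query about has not silently left the support before the other vertex of the pair is processed. Without the contiguous, ``once-in-then-out'' nature of support membership (\Cref{staticsupport,staticsupport2}), the pair $(v_a, v_b)$ might never be seen together in any single recorded snapshot, and the whole reduction would collapse. Once contiguity is in hand, the remaining bookkeeping is routine, and the proof amounts to the trivial case split on the relative order of $a$ and $b$ together with the single application of \Cref{staticsupport2} above.
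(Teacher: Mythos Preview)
Your proposal is correct and follows essentially the same approach as the paper: case-split on the relative topological order of the two vertices, and in the nontrivial case $a<b$ invoke \Cref{staticsupport2} to conclude $v_a\in S_{b-1}$ so that the stored reachability from $S_{b-1}$ to $v_b$ answers the query. Your write-up is in fact slightly more careful than the paper's in spelling out why $v_a\in S_{i-1}$ (since $a<b\le i$ forces $v_a\neq v_i$) before applying \Cref{staticsupport2}.
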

\begin{proof}
    Let $v_s, v_t \in S_{i-1} \cup \{v_i\}$. We can answer whether $v_s$ reaches $v_t$ by doing the following. If $s \ge t$ it is not possible that $v_s$ reaches $v_t$ unless they are the same vertex. In the other case, $s < t$, since $v_s\in S_{i-1}$, by \Cref{staticsupport2}, $v_s \in S_{t-1}$, and then we can use reachability from $S_{t-1}$ to $v_t$ to answer this query.
\qed\end{proof}

\begin{algorithm}[t]
    \DontPrintSemicolon
    \SetKwProg{Fn}{Function}{:}{}
    \Fn{$reaches(A, v_t, \reachabilitystructure = (S_{0}, \ldots, S_{i-1}))$}{
        $isReached \gets$ \texttt{false}\tcp*[r]{\texttt{true} if $v_t$ is reached from some vertex in $A$}
        \For{$v_s \in A$}{
            \If{$v_s = v_t$ or ($s < t$ and $S_{t-1}.v_s.reaches$)}{
                $isReached \gets$ \texttt{true}\;
            }
        }
        \Return $isReached$\;
    }
    \caption{\label{alg:reaches}Function $reaches(A, v_t, \reachabilitystructure)$, with $\reachabilitystructure = (S_{0}, \ldots, S_{i-1})$ for some $i \ge t$, and $A\cup\{v_t\}\subseteq S_{i-1}\cup\{v_i\}$. It checks if $A$ reaches $v_t$. It assumes that for all the vertices $u \in S_{t-1}$, $S_{t-1}.u.reaches$ indicates if $u$ reaches $v_t$. Reachability in $S_{i-1}\cup \{v_i\}$ is reduced to reachability from $S_{j-1}$ to $v_j$ for all $j \in \{1,\ldots, i\}$, according to \Cref{fpt-reduction}. This function takes $O(|A|) = O(k)$ time.}
\end{algorithm}

\Cref{alg:reaches} shows a function deciding whether an antichain reaches a vertex, using the technique explained in \Cref{fpt-reduction}. This function is used to implement \Cref{alg:dominates}, and our final solution in \Cref{alg:fpt}.

We will compute reachability from $S_{j-1}$ to $v_j$ for all $j \in \{1,\ldots, i\}$ incrementally when processing the vertices in topological order. That is, we assume that we have computed reachability from $S_{j-1}$ to $v_j$ for all $j \in \{1,\ldots, i-1\}$ and we want to compute reachability from $S_{i-1}$ to $v_i$.

For this we do the following. Initially, we set reachability from $u$ to $v_i$ to \texttt{false} for all $u \in S_{i-1}$. Then, for every edge $(v_j, v_i)$, if $v_j \in S_{i-1}$ we set reachability from $v_j$ to $v_i$ to \texttt{true}, and for each $u \in S_{i-1} \cap S_{j-1}$ such that $u$ reaches $v_j$ (known since $u \in S_{j-1}$) we set reachability from $u$ to $v_i$ to \texttt{true}. Note that we can compute the intersection $S_{i-1} \cap S_{j-1}$ in $O(|S_{i-1}|) = \reachabilitycomplexity$ time. For each $v_p \in S_{i-1}$ we decide whether $v_p\in S_{j-1}$ by testing if $p \le j-1$, which is correct by \Cref{staticsupport2}.

\begin{algorithm}[t]
    \DontPrintSemicolon
    \SetKwProg{Fn}{Function}{:}{}
    \Fn{$updateReachability(v_i, \reachabilitystructure = (S_{0}, \ldots, S_{i-1}))$}{
        \For{$u \in S_{i-1}$}{
            $S_{i-1}.u.reaches \gets$ \texttt{false}\tcp*[r]{\texttt{true} if $u$ reaches $v_i$}
        }
        \For{$v_j \in N^-(v_i)$}{
            \If(\tcp*[f]{Direct (by one edge) reachability}){$v_j \in S_{i-1}$}{
                $S_{i-1}.v_j.reaches \gets$ \texttt{true}\;
            }
            \For(\tcp*[f]{More than one edge reachability}){$u \in S_{i-1} \cap S_{j-1}$}{
                \If{$S_{j-1}.u.reaches$}{
                    $S_{i-1}.u.reaches \gets$ \texttt{true}\;
                }
            }
        }
    }
    \caption{\label{alg:update}Function $updateReachability$ computes reachability from vertices in $S_{i-1}$ to $v_i$. It assumes that $S_{i-1}\in\reachabilitystructure$, for all $j\in \{1, \ldots, i-1\}$, $S_{j-1} \in \reachabilitystructure$ , and for all the vertices $u \in S_{j-1}$, $S_{j-1}.u.reaches$ indicates if $u$ reaches $v_j$. Correctness of this function is explained in \Cref{fpt-correctness}. This function takes $O(k2^k(|N^-(v_i)| + 1))$ time.}
\end{algorithm}

\Cref{alg:update} shows a function that computes the reachability from $S_{i-1}$ to $v_i$, according to what was explained in this section. The correctness of this procedure is guaranteed by the following theorem.

\begin{theorem}
    \label{fpt-correctness}
    \Cref{alg:update} computes reachability from $S_{i-1}$ to $v_i$.
\end{theorem}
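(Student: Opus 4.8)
The plan is to prove that Algorithm~\ref{alg:update} correctly sets $S_{i-1}.u.reaches$ to \texttt{true} if and only if $u$ reaches $v_i$, for every $u \in S_{i-1}$. I would argue both directions of this equivalence, relying on the inductive assumption stated in the algorithm's caption: for every $j \in \{1,\ldots, i-1\}$ and every $u \in S_{j-1}$, the stored value $S_{j-1}.u.reaches$ correctly indicates whether $u$ reaches $v_j$. The key structural fact I would invoke is that any path witnessing reachability from $u$ to $v_i$ must arrive at $v_i$ through some in-neighbor $v_j \in N^-(v_i)$, so reachability to $v_i$ decomposes into ``$u$ reaches some $v_j$, then $v_j \to v_i$''.

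\textbf{Soundness (the algorithm sets only correct \texttt{true} values).} First I would check that every assignment $S_{i-1}.u.reaches \gets \texttt{true}$ is justified by an actual path to $v_i$. There are two such assignments. The direct one fires when $v_j \in S_{i-1}$ is an in-neighbor of $v_i$, which trivially gives the one-edge path $v_j \to v_i$. The indirect one fires for $u \in S_{i-1} \cap S_{j-1}$ with $S_{j-1}.u.reaches = \texttt{true}$; by the inductive assumption this means $u$ reaches $v_j$, and concatenating with the edge $(v_j, v_i)$ shows $u$ reaches $v_i$. Hence no false positive is ever recorded.

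\textbf{Completeness (every genuine reachability is recorded).} Conversely, suppose $u \in S_{i-1}$ reaches $v_i$. Take any path $P$ from $u$ to $v_i$ and let $v_j$ be the vertex immediately preceding $v_i$ on $P$, so $v_j \in N^-(v_i)$ and $u$ reaches $v_j$ (possibly $u = v_j$, with an empty prefix). I must show the relevant loop iteration marks $u$. If $u = v_j$, then $v_j = u \in S_{i-1}$ and the direct branch sets the flag. If $u \ne v_j$, the prefix of $P$ from $u$ to $v_j$ is a proper reachability witness; since $u$ reaches $v_j$ and $u < j$ in the topological order, I need $u \in S_{j-1}$ so that the stored value $S_{j-1}.u.reaches$ equals \texttt{true} and the indirect branch fires. \textbf{This is the crux of the argument:} it is not obvious that a vertex in the \emph{current} support $S_{i-1}$ was also in the \emph{earlier} support $S_{j-1}$. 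Here I would apply \Cref{staticsupport2}: since $u \in S_{i-1}$ and $u$ has topological index at most $j-1 \le i-1$, the lemma guarantees $u \in S_t$ for all $t$ between $u$'s index and $i-1$, in particular $u \in S_{j-1}$. Thus $u \in S_{i-1} \cap S_{j-1}$, its stored flag $S_{j-1}.u.reaches$ is \texttt{true} by the inductive hypothesis, and the inner loop marks it.

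\textbf{The main obstacle} I anticipate is precisely this support-persistence step: the algorithm only consults $S_{j-1}.u.reaches$ for $u$ lying in the intersection $S_{i-1} \cap S_{j-1}$, so completeness would fail if some vertex $u$ could reach $v_i$ (via $v_j$) yet have dropped out of the support at time $j-1$ while being present at time $i-1$. \Cref{staticsupport,staticsupport2} rule this out by showing the support behaves monotonically in the sense that a vertex, once absent, never returns — so membership in $S_{i-1}$ forces membership in every intermediate $S_{j-1}$. Once this is in place, combining soundness and completeness yields that Algorithm~\ref{alg:update} computes reachability from $S_{i-1}$ to $v_i$ exactly, completing the proof. \qed
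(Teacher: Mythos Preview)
Your proof is correct and follows essentially the same approach as the paper's: both establish soundness trivially and prove completeness by tracking the penultimate vertex $v_j$ on a path from $u$ to $v_i$, then using support persistence to guarantee $u \in S_{j-1}$. The only cosmetic difference is that the paper argues completeness by contradiction via \Cref{staticsupport}, whereas you argue it directly via its corollary \Cref{staticsupport2}; the content is the same.
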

\begin{proof}
    Clearly, what the algorithm sets to \texttt{true} is correct. Suppose by contradiction that there exists $y \in S_{i-1}$ reaching $v_i$ such that reachability from $y$ to $v_{i}$ was not set to \texttt{true}. Since $y$ reaches $v_{i}$, the in-neighborhood of $v_{i}$ is not empty. Since $y$ was not set to \texttt{true}, in particular, $y \not \in N^-(v_i)$, thus it reaches $v_{i}$ through a path whose last vertex previous to $v_i$ is $v_j \in N^-(v_i)$. Again, since $y$ was not set to \texttt{true}, $y \not \in S_{i-1} \cap S_{j-1}$, thus $y \not \in S_{j-1}$. But then, by \Cref{staticsupport} we have $y \not \in S_{i-1}$, a contradiction, unless $y \not \in G_{j-1}$ i.e., $y$ is after $v_j$ in topological order, which is a contradiction since $y$ reaches $v_j$.
\qed\end{proof}

\section{A linear-time parameterized algorithm}\label{sec:algorithm}
We now have all the ingredients to prove the main theorem.

\fptalgorithm*

\begin{proof}
    We process the vertices in topological order. After processing $v_i$, we will have computed all $G_i$-frontier antichains (including the right-most maximum antichain of $G_i$), and constant-time reachability queries from $S_{j-1}$ to $v_j$, for all $j\in\{1,\ldots, i\}$. Suppose we have this for $i-1$.\footnote{Since $G_0 = (\emptyset, \emptyset)$ and $S_0 = \emptyset$, there are no frontier antichains for the base case of the algorithm.} First, we obtain constant-time reachability queries from $S_{i-1}$ to $v_i$, using the procedure from \Cref{fpt-correctness} (\Cref{alg:update}), spending $\reachabilitycomplexity$ time, and $\reachabilitycomplexity$ time per edge incoming to $v_i$. For the entire algorithm, this adds up to $O(k2^k(|V| + |E|))$.
    
    By \Cref{all-type1}, we obtain all type-1 $G_i$-frontier antichains by taking every $G_{i-1}$-frontier antichain $A$, and testing if $A$ reaches $v_i$ using the reduction from \Cref{fpt-reduction} (\Cref{alg:reaches}). This takes $O(k2^k)$ time,  $O(k2^k|V|)$ in total.
    
    We compute type-2 $G_i$-frontier antichains by taking every $G_{i-1}$-frontier antichain $A$ and searching if there exists a type-1 $G_i$-frontier antichain $B$ dominating~$A$ in time $O(k^24^k)$ ($O(k^2)$ constant-time reachability queries to test domination between a pair of antichains, $O(4^k)$ such pairs), $O(k^24^k|V|)$ in total. The total complexity of the algorithm is $O(k^24^k|V| + k2^k|E|)$.
\qed\end{proof}

\begin{algorithm}[t]
\DontPrintSemicolon
    $R \gets \emptyset$, $\mathcal{F}_{0} \gets \{\emptyset\}$, $\reachabilitystructure \gets (S_0=\emptyset)$\;
    \For{$v_i \in v_1, \ldots, v_{|V|}$ in topological order}{
        $updateReachability(v_i, \reachabilitystructure)$\;
        $\mathcal{F}_i \gets \{\emptyset\}$\tcp*[r]{$\mathcal{F}_i$ stores $G_i$-frontiers}
        \For(\tcp*[f]{Compute type-1 $G_{i}$-frontiers}){antichain $A \in \mathcal{F}_{i-1}$}{
            \If{\texttt{not} $reaches(A, v_i, \reachabilitystructure)$}{
                $\mathcal{F}_i.add(A \cup \{v_i\})$\tcp*[r]{$A \cup \{v_i\}$ is a type-1 $G_{i}$-frontier}
                \lIf{$|A \cup \{v_i\}| > |R|$}{
                    $R \gets A \cup \{v_i\}$
                }
            }
        }
        $\mathcal{T}_1 \gets \mathcal{F}_i$\tcp*[r]{Contains type-1 $G_{i}$-frontiers}
        \For(\tcp*[f]{Compute type-2 $G_{i}$-frontiers}){antichain $A \in \mathcal{F}_{i-1}$}{
            $isType2 \gets$ \texttt{true}\tcp*[r]{\texttt{true} if $A$ is a type-2 $G_{i}$-frontier}
            \For{antichain $B \in \mathcal{T}_1$}{
                \lIf{$dominates(B, A, \reachabilitystructure)$}{
                     $isType2 \gets$ \texttt{false}
                }
            }
            \lIf{$isType2$}{
                 $\mathcal{F}_i.add(A)$
            }
        }
        $\reachabilitystructure.add\left(S_i \gets \bigcup_{A \in \mathcal{F}_i}A\right)$\;
    }
    \Return $R$\;
 \caption{\label{alg:fpt}The parameterized algorithm from \Cref{thm:main-fpt} computing the right-most maximum antichain $R$ of size $k$ of a DAG $G = (V, E)$ in time $\fptcomplexity$. Here, $\reachabilitystructure$ is a data structure containing reachability from the previous support to the newly added vertex at each step;  $updateReachability(v_i, \reachabilitystructure)$ computes reachability from vertices in $S_{i-1}$ to $v_i$; $reaches(A, v_i, \reachabilitystructure)$ checks if some vertex of $A$ reaches $v_i$; and $dominates(B, A, \reachabilitystructure)$ checks if an antichain $B$ dominates an antichain $A$ (\Cref{alg:update,alg:reaches,alg:dominates}).}
\end{algorithm}

\Cref{alg:fpt} shows the pseudocode of the final solution explained in \Cref{thm:main-fpt}. It maintains reachability from the corresponding support to the newly added vertex using \Cref{alg:update}. Type-1 frontier antichains are found by using \Cref{alg:reaches}, and type-2 frontier antichains are confirmed using \Cref{alg:dominates}.

Furthermore, the following remark shows that the time complexity of our algorithm can be refined to $O(k^2f^2|V| + kf|E|)$ time, where $f$ is the largest number of frontier antichains encountered at any step. This value can be as much as $2^k$ and as little as $k$.

\begin{remark}\label{remark:further-parameterization}
The number of frontier antichains can be as much as $2^k$ (e.g., $k$ independent vertices), and as little as $k$ (e.g., a sequence of sets of independent vertices of sizes $k, k-1, \ldots, 1$ such that the out-neighborhood of every vertex in the set of size $i$ is the set of size $i-1$). Since in practical examples the number of frontier antichains could be much smaller than its bound $2^k$, we refine the analysis of the algorithm in terms of the number of frontier antichains. Let $F_i$ be the number of frontier antichains in $G_i$. Then the $i$-th step of the algorithm takes $O(|S_{i-1}| + k^2F_{i-1}^2 + kF_i|S_i|)$ time, and $O(|S_{i-1}|)$ time per incoming edge. Noting that $|S_i| = O(kF_i)$, this is $O(k^2F_{i-1}^2 + k^2F_i^2)$ time and $O(kF_{i-1})$ time per incoming edge. If we take $F = \max_{i\in \{1,\ldots, |V|\}} F_i$, then the algorithm takes $O(k^2F^2|V| + kF|E|)$ time.
\end{remark}

Finally, if we are interested in recognizing whether $G$ has width at most an additional input integer $w$ we can adapt our algorithm to run in time $O(f(\min(w, k))$ $(|V| +|E|))$ instead.
\begin{remark}\label{remark:recognition-algorithm}
Given an additional input integer $w$ we can determine whether $k\le w$ in time $O(w'^24^{w'}|V| + w'2^{w'}|E|)$ ($w' = \min(w, k)$) by stopping the computation of \Cref{alg:fpt} as soon as we find an antichain of size $w+1$. If the algorithm does not stop by this reason, it means that $k \le w$, and the opposite otherwise. In both cases maximum size of an observed antichain is not greater than $w'+1$, obtaining the desired running time.
\end{remark}





%
%
%
%

\bibliographystyle{splncs04}
\bibliography{references}

\end{document}